\title{Stable non-minimal fixed points of threshold-linear networks}
\author{Jesse Geneson}
\date{}
\newtheorem{remark}{Remark}
\newtheorem{conjecture}{Conjecture}[section]
\newtheorem{theorem}{Theorem}[section]
\newtheorem{proposition}{Proposition}[section]
\newtheorem{corollary}{Corollary}[section]
\begin{document}
\maketitle

\begin{abstract}
In threshold-linear networks (TLNs), a fixed point is called \emph{minimal} if no proper subset of its support is also a fixed point. Curto et al (Advances in Applied Mathematics, 2024) conjectured that every stable fixed point of any TLN must be a minimal fixed point. We provide a counterexample to this conjecture: an explicit competitive TLN on 3 neurons that exhibits a stable fixed point whose support is not minimal (it contains the support of another stable fixed point). We prove that there is no competitive TLN on 2 neurons which contains a stable non-minimal fixed point, so our 3-neuron construction is the smallest such example. By expanding our base example, we show for any positive integers $i, j$ with $i < j-1$ that there exists a competitive TLN with stable fixed point supports $\tau \subsetneq \sigma$ for which $|\tau| = i$ and $|\sigma| = j$. Using a different expansion of our base example, we also show that chains of nested stable fixed points in competitive TLNs can be made arbitrarily long. 
\end{abstract}

\section{Introduction}
Threshold-linear networks (TLNs) are recurrent neural network models in which each neuron’s activity $x_i\ge 0$ evolves according to a linear combination of inputs passed through a rectified linear (ReLU) activation function. TLNs can exhibit multiple stable equilibria (fixed points), which represent memory or activity patterns stored by the network. Curto et al \cite{cgm0} posed the question of whether stable fixed points must be \emph{minimal} in the sense that no smaller subset of neurons could by itself support a (possibly unstable) fixed point. In other words, if a certain set of neurons $\sigma$ is active at a stable equilibrium, is it possible that a strict subset $\tau \subsetneq \sigma$ also yields a fixed point of the system? One might suspect that in \emph{competitive} networks (with inhibitory interactions between neurons), stable equilibria would tend to be minimal, since superfluous active neurons might destabilize the state. 

Threshold-linear networks (TLNs) have become a central model for understanding the combinatorial geometry of neural representations, bridging nonlinear dynamics with network topology. Over the past few decades, a series of works \cite{hsmds, xhs, hss, cdi, cdi0, langdon, langdon0} have established that many qualitative properties of TLNs (existence, multiplicity, and stability of fixed points) can be characterized purely from the signed directed graph of inhibitory and excitatory connections.

Within this framework, a recurring theme has been the search for minimal supports of fixed points, viewed as the fundamental “building blocks” of network memory states. The conjecture that stability implies minimality arose naturally from the observation that in several structured families of TLNs, every stable equilibrium corresponded to a minimal fixed-point support. This suggested a deep principle: that competition among neurons enforces a type of parsimony, preventing redundant active neurons from coexisting at equilibrium.

The present note shows that this principle can fail even in small, fully competitive networks. The counterexample reported here demonstrates that a network may sustain nested stable equilibria, in which a large stable state contains a smaller stable subset. The existence of such stable non-minimal fixed points indicates that competitive dynamics can harbor richer attractor hierarchies than previously recognized. 

In Section~\ref{sec:prelim}, we formally define the TLN model. In Section~\ref{sec:base}, we present a concrete $3$-neuron competitive TLN with a stable fixed point whose support $\{1,2,3\}$ contains a strictly smaller support $\{2\}$ that also supports a stable fixed point. Thus the full-support equilibrium is stable despite not being support-minimal. In Section~\ref{sec:two-neurons}, we show that it is impossible for a non-degenerate competitive TLN on 2 neurons to have a stable non-minimal fixed point support. In Section~\ref{sec:relative}, we expand the 3-neuron example to show for any positive integers $i, j$ with $i < j-1$ that there exists a competitive TLN with stable fixed point supports $\tau \subsetneq \sigma$ for which $|\tau| = i$ and $|\sigma| = j$. In Section~\ref{sec:3k}, we use a different expansion of the 3-neuron example to show that there exist arbitrarily long chains of stable fixed points in competitive TLNs. In Section~\ref{sec:perturb}, we show that these phenomena are robust: small changes to the weight matrix (preserving inhibition) do not destroy the existence or stability of the fixed points or their support relationships. Consequently, there is a continuum of networks exhibiting stable non-minimal fixed points. In Section~\ref{sec:CTLN}, we discuss stable fixed points in combinatorial CTLNs. Finally, we discuss some related directions in Section~\ref{sec:conclusion}.

\section{Preliminaries: TLN Model and Fixed Points}\label{sec:prelim}
We consider a competitive TLN defined by a weight matrix $W \in \mathbb{R}^{n\times n}$ and bias (external drive) vector $b \in \mathbb{R}^n$. The network dynamics are given by 
\[
\dot x = -x + [Wx + b]_+,
\] 
where $[\cdot]_+$ denotes the componentwise ReLU, i.e. $[z]_+ = \max\{z,0\}$ applied to each coordinate of $z$. The term \emph{competitive} means $b_i \ge 0$ and $W_{ii}=0$ for all $i$ (no self-coupling), and $W_{ij} \le 0$ for all $i\neq j$ (all connections between distinct neurons are inhibitory).

A fixed point $x^* \in \mathbb{R}_{\ge 0}^n$ is a nonnegative equilibrium satisfying $x^* = [W x^* + b]_+$. Equivalently, letting $\sigma = \{i : x^*_i > 0\}$ denote the support (active neuron set) of $x^*$, the following conditions must hold \cite{cgm}:
\begin{itemize}
\item \textbf{On-neuron equations:} $(I - W_{\sigma})\,x^*_{\sigma} = b_{\sigma}$, with $x^*_{\sigma} > 0$. Here $W_{\sigma}$ denotes the $|\sigma| \times |\sigma|$ principal submatrix of $W$ on the indices in $\sigma$, and $x^*_{\sigma}, b_{\sigma}$ are the restriction of $x^*, b$ to those coordinates.
\item \textbf{Off-neuron inequalities:} For each $j \notin \sigma$, $b_j + (W_{j,\sigma} \, x^*_{\sigma}) \le 0$. In words, every inactive neuron receives net input at most zero (so that its output remains at 0 under the ReLU).
\end{itemize}
We say that a support $\sigma$ \emph{supports} a fixed point if there exists some $x^* \ge 0$ with $\mathrm{supp}(x^*) = \sigma$ satisfying these conditions. We denote by $\mathrm{FP}(W,b)$ the set of all \emph{supports} of fixed points of $(W,b)$.

A fixed point $x^*$ (with support $\sigma$) is \textbf{stable} if all eigenvalues $\lambda$ of $-I + W_{\sigma}$ have $\Re(\lambda) < 0$ \cite{hss, cdi0}. Equivalently, all eigenvalues $\mu$ of $W_{\sigma}$ satisfy $\Re(\mu) < 1$. We will be especially interested in stable fixed points that are \textbf{non-minimal} in the sense that they have a support $\sigma$ which strictly contains some other support $\tau \subset \sigma$ that also lies in $\mathrm{FP}(W,b)$ (i.e. $\tau$ also supports a fixed point, stable or unstable). 

\section{Example: A 3-Neuron TLN with a Stable Non-Minimal Fixed Point}\label{sec:base}
We first provide an explicit example of a competitive TLN on $n=3$ neurons that possesses a stable equilibrium which is not minimal with respect to fixed point supports. Let the bias be uniform $b = (1,1,1)$, and consider the weight matrix 
\[
W \;=\; \begin{bmatrix}
0 & -\frac{6}{5} & -\frac{4}{5}\\ 
-\frac{3}{2} & 0 & -\frac{3}{5}\\
-\frac{1}{3} & -\frac{3}{2} & 0
\end{bmatrix}.
\] 
In particular $W_{ij}<0$ for all $i\neq j$ and $W_{ii}=0$, so $W$ is competitive. We verify that $(W,b)$ has two stable fixed points supported on $\{1,2,3\}$ and $\{2\}$, respectively, establishing the non-minimal inclusion $\{2\} \subset \{1,2,3\}$.

\subsection*{Support $\{1,2,3\}$ (full support)}
On $\sigma = \{1,2,3\}$, the on-neuron equation is $(I - W)x = b$ (since $\sigma$ includes all neurons). Solving this $3\times 3$ linear system yields 
\[
x^* = \left(\frac{3}{11},\;\frac{5}{11},\;\frac{5}{22}\right),
\] 
where all coordinates are positive. There are no off-neurons in this case, so all fixed point conditions are satisfied and $\{1,2,3\} \in \mathrm{FP}(W,b)$. Next, we examine stability: the Jacobian on this support is $J_{\{1,2,3\}} = -I + W$. The characteristic polynomial is $\det(\lambda I - J_{\{1,2,3\}}) = \lambda^3+3\lambda^2+\frac{1}{30}\lambda+\frac{11}{150}$. All coefficients are positive and $a_2 a_1 - a_3 a_0 = (3)(\frac{1}{30}) - (1)(\frac{11}{150}) = \frac{4}{150} > 0$. Thus, by the cubic Routh-Hurwitz criterion, all eigenvalues have negative real part, hence the full-support fixed point is \textbf{stable}.

\subsection*{Support $\{2\}$ (single neuron)}
Now take $\sigma = \{2\}$. The on-neuron equation $(I - W_{\{2\}})x_{\{2\}} = b_{\{2\}}$ in this case is simply $1\cdot x_2 = 1$, since $W_{22}=0$; thus we get $x_2^* = 1$. For each off-neuron $j \notin \{2\}$, we must check $b_j + W_{j,2} x_2^* \le 0$. Indeed:
\begin{align*}
b_1 + W_{1,2} x_2^* &= 1 + (-\frac{6}{5})(1) \;=\; -\frac{1}{5} \;< 0,\\
b_3 + W_{3,2} x_2^* &= 1 + (-\frac{3}{2})(1) \;=\; -\frac{1}{2} \;< 0.
\end{align*}
Both inequalities hold (in fact strictly). Thus $\{2\}$ satisfies the fixed point conditions, so $\{2\} \in \mathrm{FP}(W,b)$. The Jacobian on this support is a $1\times 1$ matrix $J_{\{2\}} = -1 + W_{22} = -1$. Its single eigenvalue is $-1$ (real part $-1<0$), indicating this fixed point is \textbf{stable} as well.

In particular, we have found two \textbf{stable} fixed points with nested supports: $\{2\} \subset \{1,2,3\}$. The larger support $\{1,2,3\}$ is therefore \emph{not minimal} in $\mathrm{FP}(W,b)$ (since it properly contains $\{2\} \in \mathrm{FP}(W,b)$). This example provides a concrete counterexample to the conjecture from \cite{cgm0} that stable fixed points must be support-minimal in TLNs.

The example is successful because of the careful balance and asymmetry in the inhibitory connections and uniform bias. Neuron 2 has sufficiently strong inhibition onto neurons 1 and 3 to individually suppress them (ensuring ${2}$ is a viable stable state), while the inhibition strengths among all three neurons are not so large as to prevent a full-support equilibrium where all neurons co-exist at reduced activity levels. In particular, neuron 2’s outgoing weights ($W_{1,2}$ and $W_{3,2}$) are larger in magnitude than those from neuron 1 or 3, which allows neuron 2 by itself to drive the net input of neurons 1 and 3 below zero (satisfying the off-neuron conditions for ${2}$). 

At the same time, the full network can settle into a state where each neuron is partially active (since $(I-W)x = b$ has a positive solution), meaning no neuron is completely shut off. This balance yields two alternative stable fixed points: one winner-take-all state with neuron 2 active alone, and one coexistence state with all three neurons active. The key property enabling this non-minimal configuration is that the inhibitory weights are tuned so that a subset of neurons (just neuron 2) can form a stable attractor by itself, yet when all neurons are present they can mutually restrain each other just enough to all remain active and stable. Thus, a larger support can remain stable even though it contains a smaller stable support, violating the support-minimality conjecture.

\section{Minimality for Two Neurons}\label{sec:two-neurons}
Given the example with 3 neurons, it is natural to wonder whether there exists a competitive TLN with only 2 neurons which contains a stable non-minimal fixed point. We show in this section that the answer is no. 

We call a TLN $(W,b)$ \emph{non-degenerate} if $b_i>0$ for at least one $i\in[n]$ and the matrix $I-W_\sigma$ is invertible for every $\sigma\subseteq[n]$.  The requirement $b_i>0$ for some~$i$ rules out the trivial case $b\equiv0$, in which the zero vector (support~$\varnothing$) is a stable fixed point. Without this condition, minimality statements would be vacuous.

We show that $n=3$ is indeed the smallest network size admitting a stable non-minimal fixed point: no non-degenerate competitive TLN on 2 neurons can have a full-support stable fixed point that properly contains another fixed-point support. 

\begin{proposition}\label{prop:n2-minimal}
Let $(W,b)$ be a non-degenerate competitive TLN on $n=2$, with
\[
W=\begin{pmatrix}0 & w_{12}\\ w_{21} & 0\end{pmatrix},\qquad b=(b_1,b_2),\qquad b_i\ge 0,\; w_{12}\le 0,\; w_{21}\le 0.
\]
If $(W,b)$ has a stable fixed point on $\{1,2\}$, then it cannot have a fixed point on either singleton $\{1\}$ or $\{2\}$. In particular, there is no stable non-minimal fixed point for $n=2$.
\end{proposition}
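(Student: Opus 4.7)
The plan is to carry out a direct parameter analysis: write down closed-form expressions for both the full-support equilibrium and the two candidate singleton equilibria, then show that the stability and positivity conditions for $\{1,2\}$ force \emph{strict} violations of the off-neuron inequalities required by either singleton.

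First I would solve the on-neuron system $(I-W)x = b$ for the support $\{1,2\}$. Since $W$ is $2\times 2$ with zero diagonal, the determinant is $\Delta := 1 - w_{12}w_{21}$, and Cramer's rule gives
\[
x_1^{*} = \frac{b_1 + w_{12} b_2}{\Delta}, \qquad x_2^{*} = \frac{b_2 + w_{21} b_1}{\Delta}.
\]
Next I would exploit stability. The eigenvalues of $W$ are $\pm\sqrt{w_{12}w_{21}}$, and since $w_{12},w_{21}\le 0$ the product $w_{12}w_{21}\ge 0$, so the eigenvalues are real. Stability requires both to have real part strictly less than $1$, which is equivalent to $w_{12}w_{21}<1$, i.e.\ $\Delta>0$. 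Combined with the requirement $x_1^{*},x_2^{*}>0$ for $\{1,2\}$ to actually support a fixed point, I obtain the two \emph{strict} inequalities
\[
b_1 + w_{12} b_2 > 0 \qquad \text{and} \qquad b_2 + w_{21} b_1 > 0.
\]

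Now I would rule out each singleton in turn. For $\sigma=\{1\}$, the on-neuron equation forces $x_1 = b_1$, and the off-neuron condition for neuron $2$ is $b_2 + w_{21}b_1 \le 0$, which directly contradicts the second strict inequality above. Symmetrically, for $\sigma=\{2\}$ the on-neuron equation forces $x_2 = b_2$, and the off-neuron inequality $b_1 + w_{12}b_2 \le 0$ contradicts the first strict inequality. Hence neither singleton lies in $\mathrm{FP}(W,b)$.

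I do not expect a genuine obstacle here; the argument reduces the competitive $n=2$ case to three lines of algebra once stability is translated into the sign condition $\Delta>0$. The only minor subtlety is to record why the strict sign of $\Delta$ follows from stability (rather than merely from non-degeneracy, which only rules out $\Delta=0$): it is this strictness that upgrades the on-neuron positivity conditions to strict inequalities, which in turn is exactly what kills the non-strict off-neuron inequalities of the singleton supports. The non-degeneracy assumption is used implicitly so that the closed-form expressions for $x_1^{*}, x_2^{*}$ are well-defined.
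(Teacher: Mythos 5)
Your proposal is correct and follows essentially the same route as the paper's proof: solve $(I-W)x=b$ explicitly, use stability to get $1-w_{12}w_{21}>0$ so that positivity of $x^*$ yields the strict inequalities $b_1+w_{12}b_2>0$ and $b_2+w_{21}b_1>0$, and observe that each singleton's off-neuron inequality is exactly the negation of one of these. The paper merely substitutes $w_{12}=-a$, $w_{21}=-d$ and phrases stability via the eigenvalues $-1\pm\sqrt{ad}$ of the Jacobian, which is equivalent to your computation with the eigenvalues of $W$.
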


\begin{proof}
Write $w_{12}=-a$, $w_{21}=-d$ with $a,d\ge 0$. The full-support stable fixed point $x=(x_1,x_2)$ solves
\[
(I-W)x=b\quad\Longleftrightarrow\quad
\begin{pmatrix}1 & a\\ d & 1\end{pmatrix}\begin{pmatrix}x_1\\ x_2\end{pmatrix}=\begin{pmatrix}b_1\\ b_2\end{pmatrix}.
\]
Thus
\[
x_1=\frac{\,b_1-a b_2\,}{1-ad},\qquad x_2=\frac{\,b_2-d b_1\,}{1-ad}.
\]
The Jacobian $J=-I+W=\begin{pmatrix}-1 & -a\\ -d & -1\end{pmatrix}$ has eigenvalues $-1\pm\sqrt{ad}$, hence the full-support fixed point is stable whenever $ad<1$, i.e., $1-ad>0$. Existence with $x_1,x_2>0$ forces the strict inequalities
\begin{equation}\label{eq:pos}
b_1>a b_2,\qquad b_2>d b_1.
\end{equation}
(These are exactly $x_1>0$ and $x_2>0$, since $1-ad > 0$.)

Now consider a singleton support, say $\{1\}$. The on-neuron equation gives $x_1=b_1$. The \emph{off}-neuron inequality for neuron $2$ is
\[
b_2+w_{21}x_1=b_2-d b_1\le 0\quad\Longleftrightarrow\quad b_2\le d b_1,
\]
which contradicts the second strict inequality in \eqref{eq:pos}. Hence $\{1\}$ cannot support a fixed point whenever $\{1,2\}$ does.

Similarly, for the singleton $\{2\}$ one needs $x_2=b_2$ and
\[
b_1+w_{12}x_2=b_1-a b_2\le 0\quad\Longleftrightarrow\quad b_1\le a b_2,
\]
which contradicts the first strict inequality in \eqref{eq:pos}. Therefore $\{2\}$ cannot support a fixed point either.

Consequently, in $n=2$ a full-support stable fixed point cannot strictly contain another fixed-point support. This rules out stable non-minimal fixed points for $n=2$.
\end{proof}

\begin{remark}
If $b\equiv 0$, then $x\equiv 0$ (support $\varnothing$) is a fixed point with Jacobian $-I$, hence stable; in this case any other fixed point would not be minimal among supports unless $\varnothing$ is excluded. This motivates the nondegeneracy convention $b_i>0$ for some $i$.
\end{remark}

\section{Relative sizes in pairs of nested stable fixed points}\label{sec:relative}

Given that we found a network with 3 neurons which contains two nested stable fixed points on supports of size $3$ and $1$, it is natural to ask about the combinatorial problem of determining the possible values of $i, j$ with $i < j$ for which there exists a competitive TLN which contains a pair of nested stable fixed points on supports of size $i$ and $j$. In this section, we show that such a competitive TLN exists whenever $i < j-1$.

To see that it is impossible to have a pair of nested stable fixed points on supports of size $j-1$ and $j$, we recall the result from \cite{cgm} that if $\sigma$ is the support of a stable fixed point, then it is impossible to have another stable fixed point on $\sigma - \{k\}$ for any neuron $k \in \sigma$. Thus, we cannot have $i = j-1$.

To prove that there exists a competitive TLN which contains a pair of nested stable fixed points on supports of size $i$ and $j$ for all positive integers $i, j$ with $i < j-1$, we show in the next theorem that there is a simple way to add neurons in any competitive TLN which preserves both fixed points and stability.

\begin{theorem}
\label{thm:neuronsplit}
Consider a competitive TLN $(W,b)$ on $n$ neurons, and suppose $x^*$ is a fixed point supported on $\sigma \subseteq \{1,\dots,n\}$.  Let $i$ be any neuron in $(W,b)$. Construct a new competitive TLN $(W', b')$ by adding $r$ new neurons $i_1,\dots,i_r$ as follows:
\begin{itemize}\itemsep0.5ex
    \item For each original neuron $j$ and each new neuron $i_k$, set $W'_{i_k,\,j} = W_{i,\,j}$.  (Each new neuron $i_k$ receives the same incoming weights from other neurons that $i$ has.)
    \item For each original neuron $p$ and each new neuron $i_k$, set $W'_{p,\,i_k} = 0$.  (Each new neuron exerts no influence on $p$.)
    \item Set all connections between new neurons to zero: $W'_{i_k,\,i_\ell} = 0$ for all $1 \le k,\ell \le r$ (including $k=\ell$, so no self-coupling). All other weights $W'_{ab}$ (when neither $a$ nor $b$ is one of the new neurons) are the same as $W_{ab}$. Also set each new neurons’s bias to the original bias: $b'_{i_k} = b_i$ for $1\le k \le r$, and $b'_j = b_j$ for all other neurons.
\end{itemize}
Then the $(n+r)$-neuron network $(W',b')$ has a fixed point $x'$ which is supported on $\sigma' := \sigma \cup \{i_1,\dots,i_r\}$ if $i \in \sigma$ and supported on $\sigma' = \sigma$ otherwise, given explicitly by 
\[ 
x'_j = 
\begin{cases}
x^*_j, & j \notin \{i_1,\dots,i_r\},\\[6pt]
x^*_i, & j = i_k \text{ for some } k~,
\end{cases} \] 
so that the activity of the original neuron $i$ is replicated in each $i_k$. Moreover, if the original fixed point $x^*$ was stable, then the new fixed point $x'$ is also stable.
\end{theorem}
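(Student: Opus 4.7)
The plan is to verify both assertions directly from the block structure of $(W', b')$. The construction is engineered so that each new neuron receives exactly the input that the original neuron $i$ receives, while sending no signal to anyone (including to other new neurons). This single observation drives both the fixed-point verification and the spectral computation.

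For the fixed-point claim I would split into the cases $i \in \sigma$ and $i \notin \sigma$. In either case, an original neuron $p$ has total input $\sum_j W'_{p,j} x'_j + b'_p = \sum_j W_{p,j} x^*_j + b_p$, since $W'_{p, i_k} = 0$ for every new neuron $i_k$; hence the on-neuron equations for $p \in \sigma$ and the off-neuron inequalities for $p \notin \sigma$ transfer verbatim from $(W,b)$. A new neuron $i_k$ has total input $\sum_j W'_{i_k,j} x'_j + b'_{i_k} = \sum_j W_{i,j} x^*_j + b_i$, which is precisely the input that neuron $i$ receives in $(W,b)$. If $i \in \sigma$ this input equals $x^*_i > 0$, so $i_k$ is correctly on at value $x^*_i$; if $i \notin \sigma$ it is $\le 0$ by the off-neuron condition for $i$, so $i_k$ correctly stays off. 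This establishes $x'$ as a fixed point with the claimed support.

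For stability, order $\sigma'$ by listing the elements of $\sigma$ first and the new neurons last (in the case $i \notin \sigma$, $\sigma' = \sigma$ and $W'_{\sigma'} = W_\sigma$, so there is nothing to prove). Then
\[
W'_{\sigma'} = \begin{pmatrix} W_\sigma & 0 \\ C & 0 \end{pmatrix},
\]
where $C$ is the $r \times |\sigma|$ matrix whose rows are all copies of $(W_{i,j})_{j \in \sigma}$, the top-right zero block records $W'_{p, i_k} = 0$, and the bottom-right zero block records the absence of interactions among new neurons. Since this matrix is block lower triangular, its spectrum is the union (with multiplicity) of the spectrum of $W_\sigma$ and $r$ copies of the eigenvalue $0$. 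Every eigenvalue of $W_\sigma$ has real part strictly less than $1$ by the stability assumption on $x^*$, and $0 < 1$, so all eigenvalues of $W'_{\sigma'}$ have real part less than $1$, giving stability of $x'$.

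I do not anticipate a serious obstacle here; the argument is essentially bookkeeping. The only points requiring care are checking the off-neuron condition for each new neuron when $i \notin \sigma$, and writing $W'_{\sigma'}$ in the correct block form so that its triangular structure (and hence its spectrum) is transparent. Both follow immediately from the defining zeros in the construction.
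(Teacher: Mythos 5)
Your proposal is correct and follows essentially the same route as the paper: verify the on/off conditions neuron by neuron using the defining zeros of $W'$, then read off the spectrum of the restricted matrix from its block lower triangular form (the paper phrases this via the Jacobian $-I+W'_{\sigma'}$ having the eigenvalues of $-I+W_\sigma$ together with $-1$ repeated $r$ times, which is equivalent to your statement about $W'_{\sigma'}$). Your explicit check that the new neurons stay off when $i\notin\sigma$ is in fact slightly more careful than the paper's treatment of that case.
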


\begin{proof}
First, we start with the case that $i \in \sigma$. We verify that $x'$ is a fixed point of $(W',b')$ on support $\sigma'$. By construction $x' \ge 0$ and $\mathrm{supp}(x') = \sigma'$. Consider any $i_k$. Its net input in the new network is 
\[ b_i \;+\; \sum_{j \in \sigma} W'_{i_k,\,j}\,x'_j \;+\; \sum_{\ell=1}^r W'_{i_k,\,i_\ell}\,x'_{i_\ell} ~. \] 
Using the definitions of $W'$, this becomes \[b_i + \sum_{j \in \sigma} W_{i,j} x^*_j + \sum_{\ell=1}^r 0 \cdot x^*_i = b_i + \sum_{j \in \sigma} W_{i,j} x^*_j.\] But the original fixed point conditions (on-neuron equation for $i\in\sigma$) give $b_i + \sum_{j \in \sigma} W_{i,j} x^*_j = x^*_i$. Therefore the net input to each $i_k$ is $x^*_i$, and since we set $x'_{i_k} = x^*_i$, it follows that $i_k$ satisfies its on-neuron equation $(I - W'_{\sigma'}) x'_{\sigma'} = b'_{\sigma'}$. 

Now consider any other active neuron $j \in \sigma'$. Its net input under $(W',b')$ is 
\[ b_j \;+\; \sum_{\ell=1}^r W'_{j,\,i_\ell}\,x'_{i_\ell} \;+\; \sum_{p \in \sigma} W'_{j,\,p}\,x'_p~. \] 
Here $W'_{j,\,i_\ell} = 0$ and $x'_{i_\ell} = x^*_i$ for each $\ell$, so the total contribution from the $r$ new neurons is $0$. For the other terms, note that $W'_{j,p} = W_{j,p}$. Hence the net input simplifies to $b_j+W_{j,\sigma} x^*_{\sigma}$. By the original fixed point condition for neuron $j$, this sum equals $x^*_j = x'_j$. Therefore neuron $j$ also satisfies its on-neuron equation in the new network. 

Finally, take any inactive neuron $q \notin \sigma'$ (so $q \notin \sigma$ in the original support as well). Its incoming input in $(W',b')$ is 
\[ b_q \;+\; \sum_{\ell=1}^r W'_{q,\,i_\ell}\,x'_{i_\ell} \;+\; \sum_{p \in \sigma} W'_{q,\,p}\,x'_p~. \] 
For each $\ell$, $W'_{q,\,i_\ell} = 0$, so the total input from all new neurons is $0$. For $p \in \sigma$, $W'_{q,p} = W_{q,p}$. Thus the net input to $q$ becomes $b_q + \sum_{p \in \sigma} W_{q,p} x^*_p = b_q + W_{q,\sigma} x^*_{\sigma}$. Since $q$ was inactive in the original fixed point, the off-neuron inequality gives $b_q + W_{q,\sigma} x^*_{\sigma} \le 0$. Hence $q$ receives net input $\le 0$ in the new network, and remains inactive ($x'_q = 0$). We have now checked all fixed point conditions for $(W',b')$ on support $\sigma'$, so $x'$ is indeed a fixed point of the expanded network.

Next we show that stability is preserved. Let $J_{\sigma} = -I + W_{\sigma}$ be the Jacobian matrix of the original network restricted to the active set $\sigma$ (so all eigenvalues of $J_{\sigma}$ have negative real parts by assumption). We examine the Jacobian $J'_{\sigma'} = -I + W'_{\sigma'}$ of the new network at $x'$ on $\sigma'$. Recall that the eigenvalues of $J'_{\sigma'}$ are the solutions to $\det(J'_{\sigma'}-\lambda I) = 0$, and the matrix $J'_{\sigma'}$ is lower block-triangular if we put the $r$ new neurons in the final $r$ rows and $r$ columns. Thus, the eigenvalues of $J'_{\sigma'}$ consist of the eigenvalues of $J_{\sigma}$, along with $-1$ repeated $r$ times. Therefore, all eigenvalues of $J'_{\sigma'}$ have negative real parts just as in the original network. We conclude that $x'_{\sigma'}$ is a stable fixed point of $(W',b')$. 

Now, we consider the case that $i \not \in \sigma$. In this case, all on-neuron and off-neuron equations are the same as those in $(W, b)$, so $\sigma' = \sigma$ is still a fixed point. Moreover, we have $-I+W_{\sigma'} = -I+W_{\sigma}$, so the eigenvalues are the same. Thus, again we conclude that $x'_{\sigma'}$ is a stable fixed point of $(W',b')$. 
\end{proof}

Let $(W, b)$ denote the 3-neuron competitive TLN with stable fixed points on $\{1,2,3\}$ and $\{2\}$. For any positive integers $i, j$ with $i < j-1$, we split neuron 2 into $i$ neurons $s_1, \dots, s_i$ and we split neuron $1$ into $j-1-i$ neurons $t_1, \dots, t_{j-1-i}$. The resulting competitive TLN has stable fixed points on $\{s_1, \dots, s_i\}$ and $\{t_1, \dots, t_{j-1-i}, s_1, \dots, s_i, 3\}$. Thus, we have derived the following result.

\begin{theorem}
    For any positive integers $i, j$ with $i < j-1$, there exists a competitive TLN with stable fixed point supports $\tau \subsetneq \sigma$ for which $|\tau| = i$ and $|\sigma| = j$.
\end{theorem}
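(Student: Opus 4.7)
The plan is to start from the 3-neuron network $(W,b)$ constructed in Section~\ref{sec:base}, whose stable fixed points have supports $\{2\}$ and $\{1,2,3\}$, and then apply Theorem~\ref{thm:neuronsplit} twice, along the lines previewed in the paragraph immediately preceding the statement. Since Theorem~\ref{thm:neuronsplit} sends competitive TLNs to competitive TLNs, iterating the construction is legitimate.

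First I would apply Theorem~\ref{thm:neuronsplit} to neuron $2$, adding $i-1$ clones so that neuron $2$ is replaced by an ensemble $s_1,\ldots,s_i$ of $i$ identically-behaving neurons. Because neuron $2$ lies in both original supports, the $i\in\sigma$ branch of the theorem applies to each stable fixed point; both therefore persist and remain stable, with new supports $\{s_1,\ldots,s_i\}$ of size $i$ and $\{1, s_1,\ldots,s_i, 3\}$ of size $i+2$, and the nested inclusion is preserved.

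Next I would apply Theorem~\ref{thm:neuronsplit} to neuron $1$ of this intermediate network, adding $j-2-i$ further clones so that neuron $1$ is replaced by an ensemble $t_1,\ldots,t_{j-1-i}$. The hypothesis $i<j-1$ gives $j-2-i\ge 0$, so this step is well-defined (and is vacuous precisely when $i=j-2$). Since neuron $1$ does not lie in the small support $\{s_1,\ldots,s_i\}$, the $i\notin\sigma$ branch of the theorem applies and the small stable fixed point is preserved unchanged, still of size $i$. Since neuron $1$ does lie in the large support, the $i\in\sigma$ branch yields a stable fixed point on $\{t_1,\ldots,t_{j-1-i}, s_1,\ldots,s_i, 3\}$, of size $(j-1-i)+i+1=j$. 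The strict inclusion $\{s_1,\ldots,s_i\}\subsetneq\{t_1,\ldots,t_{j-1-i}, s_1,\ldots,s_i, 3\}$ then provides the desired pair $\tau\subsetneq\sigma$ with $|\tau|=i$ and $|\sigma|=j$.

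I do not expect a serious obstacle: existence and stability preservation under replication are supplied entirely by Theorem~\ref{thm:neuronsplit}, so the argument reduces to observing that the two splits act on different neurons (so they do not interfere) and checking the arithmetic on support sizes, which is exactly where the hypothesis $i<j-1$ is used.
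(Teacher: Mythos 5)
Your proposal is correct and follows essentially the same route as the paper: the paper likewise splits neuron $2$ into $i$ neurons $s_1,\dots,s_i$ and neuron $1$ into $j-1-i$ neurons $t_1,\dots,t_{j-1-i}$ via Theorem~\ref{thm:neuronsplit}, obtaining stable supports $\{s_1,\dots,s_i\}\subsetneq\{t_1,\dots,t_{j-1-i},s_1,\dots,s_i,3\}$ of sizes $i$ and $j$. Your added care about which branch of Theorem~\ref{thm:neuronsplit} applies to each support, and the observation that the two splits act on distinct neurons, matches the paper's intent exactly.
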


In summary, we have shown that the structured addition of neurons allows precise combinatorial control over the relative sizes of supports in pairs of nested stable fixed points. Importantly, this construction preserves both the fixed point property and stability. It relies on a form of structural redundancy: each added neuron replicates the input profile of an original neuron but does not influence the rest of the network. This flexibility demonstrates a broad failure of support-minimality for stable fixed points in competitive TLNs, thereby opening the door to a richer combinatorial theory of support relationships in competitive TLNs.

\section{Arbitrarily long chains of nested stable fixed points}\label{sec:3k}
Since we found an infinite family of competitive TLNs with pairs of nested stable fixed points, it is natural to investigate how long a chain of nested stable fixed points can be. In this section, we show that such a chain can be arbitrarily long.

We now show how the above $3$-neuron example can be systematically expanded to construct larger TLNs that inherit larger stable non-minimal fixed points. Given the $t\times t$ weight matrix $W$ (with $t = 3$) from the example in Section~\ref{sec:base}, consider for any integer $k \ge 1$ the $t k \times t k$ matrix 
\[ 
W^{(k)} \;=\; W \;\otimes\; I_k, 
\] 
the Kronecker product of $W$ with the $k\times k$ identity matrix $I_k$. Likewise, let $b^{(k)} \in \mathbb{R}^{t k}$ be the all-ones bias vector. It is easy to verify that $W^{(k)}$ is also competitive: $(W^{(k)})_{ii}=0$ and $(W^{(k)})_{ij}\le 0$ for all $i\neq j$. 

To describe the structure of $W^{(k)}$, it is convenient to index the $3k$ neurons as $(i,r)$ where $i\in\{1,2,\dots, t\}$ indicates the neuron’s index within a copy and $r\in\{1,\dots,k\}$ indicates the copy number. By properties of the Kronecker product, we have 
\[
(W^{(k)})_{(i,r),\, (j,s)} \;=\; W_{i,j} \, \delta_{r,s},
\] 
for any $i,j \in \{1,2,\dots, t\}$ and $r,s \in \{1,\dots,k\}$ (here $\delta_{r,s}$ is the Kronecker delta). Thus, $(W^{(k)}, b^{(k)})$ consists of $k$ identical copies of the original $t$-neuron network $(W,b)$, with no coupling between different copies. Since there is no coupling between different copies of $(W,b)$, the stable fixed points of $(W^{(k)}, b^{(k)})$ are simply the disjoint unions of stable fixed points from each copy. 

In particular, starting from the example in Section 3 (where $S=\{1,2,3\}$ and $T=\{2\}$ were both supports of stable equilibria), we obtain that for each $k$, the $3k$-neuron network $(W^{(k)},b^{(k)})$ has a stable full-support equilibrium (on $S^{(k)} = \{1,2,3\}\times\{1,\dots,k\}$, i.e. all $3k$ neurons active) which is \emph{not} minimal, since the strictly smaller support $T^{(k)} = \{(2,r): 1\le r \le k\}$ also supports a stable equilibrium. 

In fact, $(W^{(k)},b^{(k)})$ contain a chain of nested stable fixed points of length $k+1$. Start with $N_1 = T^{(k)} = \{(2,r): 1\le r \le k\}$, and let $N_{i+1}$ be obtained from $N_i$ by adding $(1, i)$ and $(3,i)$ for each $i \le k$. The set $N_i$ supports a stable fixed point for every $i$ with $1 \le i \le k+1$ since $N_i$ is a disjoint union of stable fixed points from each of the $k$ copies. Observe that we have $N_i \subseteq N_j$ for all $i, j$ with $i < j$. Since $k$ can be made arbitrarily large, these chains of nested stable fixed points can be arbitrarily long.

\begin{theorem}
    For all integers $k > 1$, there exists a competitive TLN with a chain $\tau_1 \subsetneq \tau_2 \subsetneq \cdots \subsetneq \tau_k$ of stable fixed point supports. 
\end{theorem}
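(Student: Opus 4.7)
The plan is to execute the disjoint-copies construction already sketched in the text. Starting from the 3-neuron competitive TLN $(W,b)$ of Section~\ref{sec:base}, which supports stable fixed points on both $\{1,2,3\}$ and $\{2\}$, I would form the $3k$-neuron network $(W^{(k)}, b^{(k)})$ with $W^{(k)} = W \otimes I_k$ and $b^{(k)}$ the all-ones bias, indexing neurons as pairs $(i,r)$ with $i \in \{1,2,3\}$ and $r \in \{1,\dots,k\}$. Competitiveness of the new network is immediate from $(W^{(k)})_{(i,r),(j,s)} = W_{i,j}\,\delta_{r,s}$, which places zeros on the diagonal and inherits nonpositivity off-diagonal.

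The main lemma I would establish is a decoupling principle: for any support $\sigma \subseteq \{1,2,3\} \times \{1,\dots,k\}$, write $\sigma_r = \{i : (i,r) \in \sigma\}$ for its restriction to copy $r$. Then $\sigma \in \mathrm{FP}(W^{(k)}, b^{(k)})$ if and only if $\sigma_r \in \mathrm{FP}(W,b)$ for every $r$, and the associated fixed point is stable if and only if each restriction supports a stable fixed point of $(W,b)$. This holds because the on-neuron system $(I - W^{(k)}_\sigma)\,x_\sigma = b^{(k)}_\sigma$ decomposes into $k$ independent per-copy blocks, each off-neuron inequality lives inside a single copy, and $-I + W^{(k)}_\sigma$ is block diagonal with blocks $-I + W_{\sigma_r}$, so its spectrum is the union of the block spectra.

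With decoupling in hand, I would exhibit the chain explicitly. Set $\tau_1 = \{(2,r) : 1 \le r \le k\}$ and $\tau_j = \tau_{j-1} \cup \{(1, j-1), (3, j-1)\}$ for $2 \le j \le k$. Then $(\tau_j)_r = \{1,2,3\}$ when $r \le j-1$ and $(\tau_j)_r = \{2\}$ otherwise. Since both $\{1,2,3\}$ and $\{2\}$ support stable fixed points of $(W,b)$, the decoupling lemma forces each $\tau_j$ to support a stable fixed point of $(W^{(k)}, b^{(k)})$, and the inclusions $\tau_1 \subsetneq \tau_2 \subsetneq \cdots \subsetneq \tau_k$ are strict by construction. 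The only step requiring real care is stating and verifying the decoupling lemma; past that, the Kronecker structure makes both the algebra and the stability argument completely separable across copies, so I do not anticipate any genuine obstacle.
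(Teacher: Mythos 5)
Your proposal is correct and follows essentially the same route as the paper: the Kronecker product $W \otimes I_k$ with all-ones bias, the observation that the $k$ copies are uncoupled so fixed points and stability decompose copy-by-copy, and a chain built by upgrading copies one at a time from support $\{2\}$ to support $\{1,2,3\}$. The only difference is that you state the decoupling principle as an explicit lemma where the paper asserts it informally; this is a presentational improvement, not a different argument.
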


This construction works because the Kronecker product creates $k$ uncoupled replicas of the base 3-neuron network, each capable of sustaining its own pair of nested equilibria. By taking disjoint unions of those stable supports, we can “stack” these local attractors into a global hierarchy without introducing new cross-inhibitory effects that could destabilize them. The absence of coupling ensures that activating additional copies simply adds independent, already-stable subnetworks, yielding progressively larger stable supports $N_1 \subsetneq N_2 \subsetneq \dots \subsetneq N_{k+1}$. Dynamically, the network behaves like a collection of identical inhibitory motifs that can be turned on one by one without interference, producing an inclusion-ordered chain of equilibria. Conceptually, this reveals that stable non-minimality is not a rare coincidence but a composable property: once a single motif admits nested stable states, arbitrarily long chains can be built by replication. This suggests that the attractor poset of competitive TLNs can, in principle, have unbounded height even under simple weight structures.

\section{Perturbation and robustness of stable fixed point structure}\label{sec:perturb}
The construction in the last section consisted of disjoint copies of the 3-neuron example, with no interaction between the copies (i.e., we had $W_{i j} = 0$ for $i, j$ in different copies). In this section, we show that it is possible to modify the construction so that there is inhibition between \emph{every} pair of neurons, i.e., $W_{i j} < 0$ for all $i, j$. 

In particular, we claim that the existence and stability of the above non-minimal-support equilibria are robust to small perturbations of the weight matrix. Intuitively, since the fixed point conditions are defined by linear equations and inequalities, any solution that satisfies them strictly (with some margin) will persist under sufficiently small changes. The stability criterion (eigenvalues of $-I+W_{\sigma}$) is also an open condition. We formalize this as follows.

\begin{proposition}\label{prop:perturbation}
Let $(W_0,b)$ be a competitive TLN and let $\sigma \subseteq \{1,\dots,n\}$ be the support of a fixed point $x^0 \in \mathbb{R}^n_{\ge0}$. Assume that:
\begin{enumerate}
    \item $x^0_{\sigma} > 0$ and $(I - (W_0)_{\sigma})\, x^0_{\sigma} = b_{\sigma}$ (so $x^0$ satisfies the on-neuron equations on $\sigma$),
    \item $b_j + (W_0)_{j,\sigma} x^0_{\sigma} < 0$ for every $j \notin \sigma$ (strict off-neuron inequalities), and 
    \item all eigenvalues of $-I + (W_0)_{\sigma}$ have negative real parts (so the fixed point on $\sigma$ is stable in $(W_0,b)$).
\end{enumerate}
Then there exists $\varepsilon > 0$ such that for any weight matrix $W$ (of the same size) satisfying $\|W - W_0\| < \varepsilon$ and $W_{ii}=0$, $W_{ij}<0$ for $i\neq j$ (i.e. $W$ is strictly competitive and is a small perturbation of $W_0$), the TLN $(W,b)$ has a fixed point $x$ with support $\sigma$ that is also stable. Moreover, $x$ depends continuously on $W$ and for $W$ close to $W_0$ we have $x_{\sigma}>0$ and the same strict inequalities $b_j + W_{j,\sigma} x_{\sigma} < 0$ for $j \notin \sigma$.
\end{proposition}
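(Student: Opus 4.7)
The plan is to observe that every condition appearing in the hypothesis — invertibility of $I - W_\sigma$, positivity of the candidate support vector, the strict off-neuron inequalities, and the spectral condition for stability — is an open condition on the space of weight matrices, each satisfied \emph{strictly} at $W_0$. Hence a neighborhood on which they all hold simultaneously is just a finite intersection of open sets containing $W_0$, and no implicit function theorem is required.

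First I would build the candidate fixed point explicitly. Hypothesis~(3) forces every eigenvalue of $(W_0)_\sigma$ to have real part less than $1$, so $I - (W_0)_\sigma$ is invertible; by continuity of the determinant there is a neighborhood $U_1$ of $W_0$ on which $I - W_\sigma$ remains invertible, and on $U_1$ the map
\[
W \;\longmapsto\; x_\sigma(W) \;:=\; (I - W_\sigma)^{-1} b_\sigma
\]
is continuous with $x_\sigma(W_0) = x^0_\sigma$. Extend by $x_j(W) = 0$ for $j \notin \sigma$. I would then shrink $U_1$ three times: by hypothesis~(1) the open condition $x_\sigma(W) > 0$ persists on some $U_2 \subseteq U_1$; by hypothesis~(2) the finitely many strict inequalities $b_j + W_{j,\sigma}\, x_\sigma(W) < 0$ for $j \notin \sigma$ persist on some $U_3 \subseteq U_2$ via continuity of the composition; and by hypothesis~(3), since eigenvalues depend continuously on matrix entries, the open spectral condition that every eigenvalue of $-I + W_\sigma$ has negative real part persists on some $U_4 \subseteq U_3$.

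Finally I would intersect $U_4$ with the strictly competitive cone $\{W : W_{ii} = 0,\ W_{ij} < 0 \text{ for } i \ne j\}$, which is an open cone inside the affine subspace $\{W_{ii}=0\}$, and choose $\varepsilon > 0$ so that the $\varepsilon$-ball around $W_0$ sits inside this intersection. On that ball, $x(W)$ automatically has support exactly $\sigma$, satisfies both the on-neuron equation and the strict off-neuron inequalities, and is stable, with continuous dependence on $W$ following from the closed-form definition above. The only mild subtlety, rather than a genuine obstacle, is that the invertibility needed to even define $x_\sigma(W)$ must first be extracted from the stability hypothesis~(3); once that is noted, the rest is a routine chain of open-condition continuity statements.
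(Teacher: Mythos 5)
Your argument is correct and follows the same skeleton as the paper's proof: establish invertibility of $I-(W_0)_\sigma$ from the stability hypothesis, obtain a continuously varying solution of the on-neuron equations, and then observe that positivity, the strict off-neuron inequalities, and the spectral condition are all open and hence persist on a finite intersection of neighborhoods. The one substantive difference is at the first step: the paper invokes the implicit function theorem to produce $x_\sigma(W_\sigma)$, whereas you exploit the linearity of the on-neuron system and write $x_\sigma(W) = (I-W_\sigma)^{-1}b_\sigma$ directly, getting continuity from continuity of matrix inversion. Your route is the more elementary one and arguably the more natural, since the IFT is overkill for a linear system; the paper's phrasing would generalize to nonlinear fixed-point equations but buys nothing here. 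One small caveat on your final step: you should not claim that an entire $\varepsilon$-ball around $W_0$ can be placed inside the strictly competitive cone, because in the intended application $W_0 = W^{(k)}$ has off-diagonal entries equal to zero and therefore lies on the boundary of that cone, so no such ball exists. This is only a phrasing slip, not a gap: strict competitiveness of $W$ is a hypothesis of the proposition, not something your $\varepsilon$ must guarantee, so it suffices to take $\varepsilon$ small enough that the ball lies in your $U_4$ and then restrict attention to those $W$ in the ball that happen to be strictly competitive.
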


\begin{proof}
Consider the system of equations given by the on-neuron conditions: $F(W_{\sigma}, x_{\sigma}) := (I - W_{\sigma}) x_{\sigma} - b_{\sigma} = 0$. By assumption, at $W_0$ there is a solution $x^0_{\sigma}$ of $F(W_{0,\sigma}, x_{\sigma})=0$ with $x^0_{\sigma} > 0$. Furthermore, the matrix $\partial F/\partial x_{\sigma} = I - (W_0)_{\sigma}$ is invertible (since condition (3) implies $\det(I - (W_0)_{\sigma}) \neq 0$. Therefore, by the implicit function theorem, there exists a continuously differentiable function $x_{\sigma}(W_{\sigma})$ defined in a neighborhood of $W_{0,\sigma}$ such that $F(W_{\sigma}, x_{\sigma}(W_{\sigma}))=0$ and $x_{\sigma}(W_{0,\sigma}) = x^0_{\sigma}$. In other words, for all sufficiently small perturbations of the submatrix $W_{\sigma}$, we obtain a corresponding solution $x_{\sigma}$ (close to $x^0_{\sigma}$) of the on-neuron equations. By continuity, $x_{\sigma}(W_{\sigma})$ will remain strictly positive for $W$ sufficiently close to $W_0$ (since $x^0_{\sigma} > 0$). Lifting this to the full $n$-dimensional state, we set $x_i = 0$ for $i \notin \sigma$ to obtain a candidate fixed point $x(W)$ of $(W,b)$ on support $\sigma$. 

We next check the off-neuron inequalities and stability for the perturbed system. For each $j \notin \sigma$, define $\Phi_j(W) := b_j + W_{j,\sigma} x_{\sigma}(W_{\sigma})$. At $W=W_0$, we have $\Phi_j(W_0) = b_j + (W_0)_{j,\sigma} x^0_{\sigma} < 0$ by assumption (2). Since $\Phi_j$ depends continuously on $W$, there is a neighborhood of $W_0$ in which $\Phi_j(W)$ remains strictly negative for all $j \notin \sigma$. Thus for $W$ sufficiently close to $W_0$, all off-neuron conditions $b_j + W_{j,\sigma} x_{\sigma}(W_{\sigma}) \le 0$ hold (in fact with strict inequality), meaning $\sigma$ indeed supports a fixed point $x(W)$ of $(W,b)$ with the desired property $x_i(W)>0$ iff $i\in\sigma$. 

Finally, consider the stability condition. Write the perturbed Jacobian on support $\sigma$ as $J_{\sigma}(W) = -I + W_{\sigma}$. When $W=W_0$, $J_{\sigma}(W_0)$ has eigenvalues with negative real parts by assumption (3). The eigenvalues of $J_{\sigma}(W)$ depend continuously on $W_{\sigma}$, since the roots of any polynomial depend continuously on its coefficients. Hence there exists $\varepsilon' > 0$ such that if $\|W_{\sigma} - (W_0)_{\sigma}\| < \varepsilon'$, then every eigenvalue $\lambda$ of $J_{\sigma}(W)$ satisfies $\Re(\lambda) < 0$ (since none of the eigenvalues of $J_{\sigma}(W_0)$ are on the imaginary axis, a sufficiently small perturbation cannot cross the stability boundary). This ensures that the fixed point $x(W)$ is linearly stable for $W$ in a neighborhood of $W_0$. We may choose $\varepsilon>0$ small enough that the conditions for both the existence of $x(W)$ and stability of $J_{\sigma}(W)$ are simultaneously satisfied, completing the proof.
\end{proof}

This perturbation result implies that the special structure of the Kronecker product expansion is \emph{not} essential for the existence of the stable non-minimal fixed point patterns. Any sufficiently small deformation of $W^{(k)}$ will preserve the key features. In particular, we have:

\begin{corollary}\label{cor:uncountable}
For each $k \ge 3$, there is an open set $U$ of strictly competitive $k \times k$ weight matrices such that every $W \in U$ (with bias $b^{(k)}$) has a stable fixed point whose support is not minimal.
\end{corollary}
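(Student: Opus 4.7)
The plan is to combine the splitting construction of Theorem~\ref{thm:neuronsplit} with the robustness guarantee of Proposition~\ref{prop:perturbation}, producing for each $k \ge 3$ a competitive (but not necessarily strictly competitive) $k$-neuron TLN $W_0^{(k)}$ with nested stable fixed points, and then perturbing into the strictly competitive cone. For $k = 3$, I would take $W_0^{(3)}$ to be the matrix $W$ from Section~\ref{sec:base}, which is already strictly competitive and carries stable fixed points on $\tau = \{2\}$ and $\sigma = \{1,2,3\}$ with strict off-neuron inequalities ($-\tfrac{1}{5}$ and $-\tfrac{1}{2}$) at the singleton. Applying Proposition~\ref{prop:perturbation} to each of the two fixed points yields $\varepsilon_\tau, \varepsilon_\sigma > 0$; the $\varepsilon$-ball around $W_0^{(3)}$ (for $\varepsilon = \min(\varepsilon_\tau, \varepsilon_\sigma)$), intersected with the set of strictly competitive matrices, is a nonempty open set $U$, and every $W \in U$ has a stable fixed point on $\{1,2,3\}$ whose support strictly contains $\tau \in \mathrm{FP}(W,b)$.

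For $k > 3$, I would first apply Theorem~\ref{thm:neuronsplit} to split neuron~$1$ of the base matrix into $k-3$ new copies $1_1, \dots, 1_{k-3}$, producing a $k$-neuron competitive TLN $W_0^{(k)}$ whose small fixed point remains on $\{2\}$ (since $1 \notin \{2\}$) and whose large fixed point is full-support on all $k$ neurons. The off-neuron inequalities at $\{2\}$ remain strict for each new neuron $1_\ell$, because $W'_{1_\ell, 2} = W_{1,2} = -\tfrac{6}{5}$ reproduces exactly the inequality $1 - \tfrac{6}{5} = -\tfrac{1}{5} < 0$ from Section~\ref{sec:base}; the two inequalities for the original neurons $1$ and $3$ also carry over unchanged. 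Stability of both fixed points is preserved by Theorem~\ref{thm:neuronsplit}, so Proposition~\ref{prop:perturbation} applies to both, producing a neighborhood $N$ of $W_0^{(k)}$ in which the two stable fixed points persist.

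The main obstacle, and it is quite mild, is that $W_0^{(k)}$ for $k > 3$ is not strictly competitive, since Theorem~\ref{thm:neuronsplit} sets many off-diagonal entries (between new neurons, and from new neurons into the rest of the network) equal to zero. To produce $U$, I would replace each such zero entry by $-\delta$ for $\delta > 0$ small enough that the resulting matrix $W^\star$ still lies in $N$; by Proposition~\ref{prop:perturbation} $W^\star$ carries both fixed points, and by construction $W^\star$ is strictly competitive. A ball around $W^\star$ contained in both $N$ and the strictly competitive cone then serves as the open set $U$. Once the perturbation bookkeeping is done, the proof is essentially a direct composition of two tools already developed: Theorem~\ref{thm:neuronsplit} scales the example to any required neuron count, and Proposition~\ref{prop:perturbation} promotes the strict off-neuron inequalities and hyperbolic stability of each fixed point to openness in weight space.
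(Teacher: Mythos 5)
Your proposal is correct, and it rests on the same engine the paper intends --- Proposition~\ref{prop:perturbation} applied to each of the two nested stable supports, with the observation that the resulting $\varepsilon$-ball meets the strictly competitive cone in a nonempty (relatively) open set. Where you diverge is in the choice of the base point $W_0^{(k)}$: the paper presents the corollary immediately after discussing deformations of the Kronecker-product networks $W^{(k)}=W\otimes I_k$, which are $3k\times 3k$ and hence only realize dimensions divisible by $3$, whereas you reach every $k\ge 3$ by splitting neuron $1$ of the base example into $k-3$ copies via Theorem~\ref{thm:neuronsplit}. Your route is arguably the one actually needed to prove the corollary as literally stated (for all $k\ge 3$, not just multiples of $3$), and your verification that the new off-neuron inequalities at the support $\{2\}$ remain strict ($1-\tfrac{6}{5}=-\tfrac{1}{5}<0$ for each clone of neuron $1$) is exactly the hypothesis check that makes Proposition~\ref{prop:perturbation} applicable to the non-full support. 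The only cosmetic caveat is that a ball around your $W^\star$ in the space of all matrices contains matrices with nonzero diagonal, so ``open'' must be read relative to the zero-diagonal subspace --- but that imprecision is already present in the corollary's statement and does not affect the argument.
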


In summary, starting from one concrete $3$-neuron counterexample to minimality, we have shown how to generate large families of higher-dimensional examples. The Kronecker product construction produces structured large networks with stable non-minimal equilibria, and the perturbation argument shows that this phenomenon persists beyond that specific structured form, occupying an open set in the space of all competitive weight matrices. 

\section{Combinatorial TLNs}\label{sec:CTLN}

A \emph{combinatorial threshold--linear network} (CTLN) is a special subclass of TLNs specified by a directed graph $G$ on $n$ vertices and parameters $(\varepsilon,\delta,\theta)$ \cite{curto, cm, cm0, parmeleeetal, parmelee}.
The weight matrix $W=W(G;\varepsilon,\delta)$ and bias $b$ are
\begin{equation*}
W_{ii}=0,\qquad 
W_{ij}=
\begin{cases}
-1+\varepsilon,& \text{if } j\to i \text{ in }G,\\[4pt]
-1-\delta,& \text{if } j\not\to i,
\end{cases}
\quad (i\neq j),
\qquad
b_i=\theta>0~.
\end{equation*}
We work in the standard \emph{legal range} for CTLNs:
\begin{equation*}
\delta>0,\quad 0<\varepsilon<1,\quad \text{and}\quad \varepsilon<\frac{\delta}{\delta+1}.
\end{equation*}
Thus all off--diagonal entries are inhibitory, with edges of $G$ corresponding to \emph{weaker} inhibition $-1+\varepsilon$ and non--edges to \emph{stronger} inhibition $-1-\delta$.  The dynamics are
$\dot x=-x+[Wx+b]_+$, so fixed points are determined by the usual on--neuron equations and off--neuron inequalities.

\paragraph{Clique supports and target--free cliques.}
A subset $\sigma\subseteq[n]$ is a (directed) \emph{clique} if $G|_\sigma$ has all bidirectional edges between distinct vertices.
A vertex $v\notin\sigma$ is a \emph{target} of $\sigma$ if every $j\in\sigma$ has $j\to v$ in $G$.
A clique $\sigma$ is \emph{target--free} if it has no external target.
In the legal range, it has been shown that every target-free clique supports a stable fixed point, and it is conjectured that these are the only stable fixed points in combinatorial TLNs.

\begin{conjecture}\label{conj:target}\cite{div, cgm0}
    Let $W = W(G,\varepsilon,\delta)$ be a CTLN on $n$ nodes with graph $G$, let $\sigma \subseteq [n]$, and suppose that $\delta > 0$ and $0 < \varepsilon < \delta/(\delta+1)$. Then $\sigma$ is the support of a stable fixed point if and only if $\sigma$ is a target-free clique.
\end{conjecture}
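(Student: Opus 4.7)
The plan is to split into the two implications. The forward direction ($\sigma$ a target--free clique $\Rightarrow$ stable fixed point on $\sigma$) is essentially a direct calculation. Writing $s=|\sigma|$, the clique structure gives $W_\sigma=(\varepsilon-1)(J_\sigma-I_\sigma)$, so $I-W_\sigma=\varepsilon I+(1-\varepsilon)J_\sigma$ and by symmetry the unique solution of the on--neuron system is $x^*_j=c:=\theta/(\varepsilon+(1-\varepsilon)s)>0$ for every $j\in\sigma$. Stability is immediate: $W_\sigma$ has eigenvalues $(\varepsilon-1)(s-1)<0$ and $1-\varepsilon\in(0,1)$, both with real part strictly less than~$1$. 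For the off--neuron inequality at $v\notin\sigma$ with $k$ in--edges from $\sigma$, a direct computation gives the net input at $v$ as
\[
\frac{\theta\bigl[-\varepsilon(s-k-1)-(s-k)\delta\bigr]}{\varepsilon+(1-\varepsilon)s},
\]
and since $v$ is not a target we have $k\le s-1$, so $s-k\ge 1$ and the bracket is strictly negative, verifying the inequality (strictly).

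For the converse, assume $\sigma$ supports a stable fixed point $x^*$. I would first dispatch the clique subcase: if $\sigma$ is a clique with some target $v$, the same symmetry forces $x^*_j=c$ for all $j\in\sigma$, and the net input at $v$ (the $k=s$ case of the formula above) equals $\theta\varepsilon/(\varepsilon+(1-\varepsilon)s)>0$, contradicting the off--neuron inequality. Hence any clique supporting a fixed point is automatically target--free, and the substantive remaining task is to rule out non--clique supports.

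The hard case is $\sigma$ not a clique. My strategy would combine \emph{graphical domination} with a spectral analysis of $(I-W_\sigma)^{-1}b_\sigma$. Since $\sigma$ is not a clique, there exist $j,k\in\sigma$ with $j\not\to k$. I would first search for a graphical domination pair $(j,k)$ inside $\sigma$ (meaning $k\to j$, $j\not\to k$, and every $i\in\sigma\setminus\{j,k\}$ with $i\to j$ also satisfies $i\to k$) and invoke the CTLN domination rule from the existing literature \cite{cgm}, which forces either $x^*_j\le 0$ (destroying the fixed point) or an eigenvalue of $W_\sigma$ with $\Re(\mu)\ge 1$ (destroying stability). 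If no such pair exists inside $\sigma$, I would induct on $|\sigma|$, tracking how the asymmetry between the $-1+\varepsilon$ and $-1-\delta$ entries propagates through the Cramer formula for $x^*_\sigma$, and use the legal--range bound $\varepsilon<\delta/(\delta+1)$ at its tight edge to push some coordinate of $x^*$ non--positive or some eigenvalue of $W_\sigma$ across the stability boundary $\Re(\mu)=1$.

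The main obstacle is exactly this last step, which is why Conjecture~\ref{conj:target} remains open. Non--clique supports form a combinatorially rich family, and there is no single uniform graph--theoretic obstruction (dominated vertex, missing reciprocal arrow, non--uniform in--degree, etc.) that covers every configuration simultaneously. Any inductive reduction must identify a proper sub--support whose structural properties genuinely constrain the full $\sigma$, and matching the sharpness of $\varepsilon<\delta/(\delta+1)$ with a spectral invariant of $W_\sigma$ that works for arbitrary non--clique $\sigma$ appears to require a genuinely new idea beyond the current domination framework; this is where I would expect the proof to stall.
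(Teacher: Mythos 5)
This statement is labeled as a \emph{conjecture} in the paper, and the paper offers no proof of it; it only records that it is known in special cases (oriented graphs, symmetric graphs, $n\le 4$, and the regime $\delta>\frac{\varepsilon}{1-\varepsilon}(n^2-n-1)$, all cited to prior work). So there is no ``paper's own proof'' to compare against, and no complete proof of the statement should be expected. Your forward direction is correct and is the standard argument: for a target-free clique the on-neuron system $(\varepsilon I+(1-\varepsilon)J)x_\sigma=\theta\mathbf{1}$ has the positive symmetric solution, the eigenvalues $(\varepsilon-1)(s-1)$ and $1-\varepsilon$ of $W_\sigma$ both have real part less than $1$, and your off-neuron computation (including the $k=s$ case showing a clique with a target cannot support a fixed point, which is essentially the content the paper cites for its Proposition~\ref{prop:ctln-clique-minimal}) checks out. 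This is the direction the paper asserts as already established in the literature.

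The converse --- that a stable fixed point support must be a target-free clique --- is exactly the open content of the conjecture, and your proposal does not close it; to your credit, you say so explicitly. The domination-rule step you sketch is real and does handle some configurations, but as you note there is no known uniform obstruction covering all non-clique supports, and the inductive/spectral argument you outline for the remaining cases is a research program rather than a proof. So the honest verdict is: your proposal correctly proves the ``if'' direction, correctly isolates where the ``only if'' direction stalls, and does not (and cannot, at present) constitute a proof of the full statement. No error, but no proof either --- which is consistent with the paper presenting this as a conjecture.
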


Conjecture~\ref{conj:target} has been proved in some special cases, including for oriented graphs \cite{div} and symmetric graphs \cite{cm0}. In \cite{cgm0}, the conjecture was proved for all combinatorial TLNs of size $n \le 4$ with $\delta > 0$ and $0 < \varepsilon < \delta/(\delta+1)$, as well as all combinatorial TLNs of any size with $0<\varepsilon<1$ and \[\delta > \frac{\varepsilon}{1-\varepsilon}(n^2-n-1).\]

\paragraph{Minimality of clique--supported fixed points.}
For completeness, we include a short proof that \emph{any} CTLN fixed point supported on a clique is automatically minimal among fixed--point supports.  In particular, this yields that every \emph{stable} clique--supported fixed point is minimal.

\begin{proposition}\label{prop:ctln-clique-minimal}
Let $(W,b)$ be a CTLN in the legal range, and suppose that $\sigma$ is a clique. Then no proper subset $\tau\subsetneq\sigma$ supports a fixed point. 
\end{proposition}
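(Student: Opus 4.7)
The plan is to argue by contradiction: suppose some $\tau\subsetneq\sigma$ supports a fixed point $y^*$, fix a witness vertex $v\in\sigma\setminus\tau$ (which exists because $\tau$ is proper), and show that the off-neuron inequality at $v$ is incompatible with the on-neuron equations on $\tau$ under the legal-range hypothesis $\varepsilon>0$.

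First, I would exploit that $\tau$, being a subset of the clique $\sigma$, is itself a clique, so $W_\tau$ has $0$ on the diagonal and $-1+\varepsilon$ in every off-diagonal entry. The on-neuron system $(I-W_\tau)y^*_\tau=\theta\mathbf{1}$ is therefore fully symmetric under permutations of $\tau$. Since $I-W_\tau$ has eigenvalues $\varepsilon$ (with multiplicity $k-1$ on $\mathbf{1}^\perp$) and $1+(k-1)(1-\varepsilon)$ (with eigenvector $\mathbf{1}$), where $k=|\tau|$, it is invertible and its unique solution must be a constant vector $y^*_i=c$ for every $i\in\tau$, with $c=\theta/\bigl(1+(k-1)(1-\varepsilon)\bigr)>0$. (The edge case $k=0$ will be handled separately: there the off-neuron inequality at $v$ reduces to $\theta\le 0$, immediately contradicting $\theta>0$.)

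Next, I would use that $v\in\sigma$ and $\sigma$ is a clique to conclude $j\to v$ in $G$ for every $j\in\tau\subseteq\sigma$, so $W_{v,j}=-1+\varepsilon$ uniformly across $j\in\tau$. The off-neuron inequality at $v$ then reads
\[
\theta+\sum_{j\in\tau}(-1+\varepsilon)\,c \;=\; \theta-(1-\varepsilon)kc \;\le\; 0,
\]
and substituting the value of $c$ and clearing the positive denominator $1+(k-1)(1-\varepsilon)$ collapses this to $1+(k-1)(1-\varepsilon)\le (1-\varepsilon)k$, i.e.\ $\varepsilon\le 0$, contradicting $\varepsilon>0$.

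The whole argument is a few lines of algebra, so there is no real obstacle beyond setting up the reduction cleanly. The only step that requires a moment of thought is recognizing that the on-neuron solution on $\tau$ must be constant (a consequence of symmetry plus invertibility of $I-W_\tau$ in the legal range); once that is in hand, the legal-range inequality $\varepsilon>0$ is exactly the slack needed to close the contradiction, and the conclusion applies to every $\tau\subsetneq\sigma$ simultaneously.
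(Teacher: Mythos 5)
Your argument is correct, but it takes a different route from the paper. The paper's proof is a one-line citation: it invokes the result from Curto--Geneson--Morrison (\emph{Neural Comput.}, 2019) that every clique support of a CTLN fixed point must be target-free, and then observes that any $v\in\sigma\setminus\tau$ is a target of $\tau$ (since $\sigma$ is a clique, $j\to v$ for every $j\in\tau$), so $\tau$ cannot support a fixed point. You instead prove the needed special case of that cited result from scratch: you solve the on-neuron system on the sub-clique $\tau$ explicitly, using the eigenstructure of $I-W_\tau$ (eigenvalues $\varepsilon$ and $1+(k-1)(1-\varepsilon)$, both nonzero in the legal range) to force the constant solution $c=\theta/\bigl(1+(k-1)(1-\varepsilon)\bigr)$, and then show the off-neuron inequality at $v$ collapses to $\varepsilon\le 0$. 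The algebra checks out, the $k=0$ edge case is handled, and the key observation --- that $v$ is a target of $\tau$ because $\sigma$ is a clique --- is the same one the paper relies on. What your version buys is self-containedness (no appeal to the external target-free theorem) and an explicit quantitative witness for why the legal-range condition $\varepsilon>0$ is exactly the obstruction; what the paper's version buys is brevity and generality, since the cited theorem covers clique supports of arbitrary CTLNs, not just sub-cliques of a given clique.
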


\begin{proof}
Curto et al \cite{cgm} showed that all clique fixed points must be target-free, so $\tau$ cannot be a fixed point since every element of $\sigma - \tau$ is a target.
\end{proof}

If Conjecture~\ref{conj:target} is true, then the last lemma would imply that every stable fixed point support in a combinatorial CTLN is minimal if $\delta > 0$ and $0 < \varepsilon < \delta/(\delta+1)$.

\paragraph{Connection to the minimality conjecture.}
Curto et al \cite{cgm0} conjectured that in \emph{any} TLN, every stable fixed point is minimal. We showed this is false in general, but Proposition~\ref{prop:ctln-clique-minimal} shows that it holds for CTLNs whenever the stable fixed point is supported on a clique.  Thus, within the combinatorial TLN family, the minimality conjecture is confirmed for the ubiquitous class of clique--supported attractors.

\section{Conclusion}\label{sec:conclusion}
We have demonstrated that stable fixed points in competitive TLNs need not be minimal. In particular, we presented an explicit $3$-neuron network with a stable equilibrium whose full support (all three neurons) contains a strictly smaller support (a single neuron) that is itself a stable equilibrium. Using this counterexample as a building block, we showed for any positive integers $i, j$ with $i < j-1$ that there exists a competitive TLN with stable fixed point supports $\tau \subsetneq \sigma$ for which $|\tau| = i$ and $|\sigma| = j$. We also constructed $3k$-neuron networks (for arbitrary $k$) via a Kronecker product with the identity, yielding chains of nested stable fixed points of arbitrary length. Finally, we showed that these patterns are robust to small inhibitory perturbations of the weight matrix, implying the existence of a continuum of competitive TLNs that exhibit such stable non-minimal equilibria. Together, these findings refute the conjecture of Curto et al.~\cite{cgm0} that any stable fixed point must have a minimal support, and reveal that even purely competitive (inhibition-dominated) networks can sustain coexisting stable states with nested supports. In other words, competitive TLNs can possess a richer and less intuitive attractor landscape than previously assumed.

These discoveries open up several directions for further investigation. First, an important question is how to classify the minimal network motifs that admit stable non-minimal fixed points. Our $3$-neuron example shows that three neurons suffice in the competitive setting (while two neurons do not), establishing one minimal motif for this phenomenon. However, stable non-minimal equilibria are not exclusive to this $3$-neuron pattern; indeed, distinct network architectures can also produce stable states with nested supports. Identifying the full spectrum of such motifs and support-inclusion configurations remains a key direction for future work.

Second, the presence of nested stable states suggests the possibility of a \emph{hierarchy of attractors} in inhibitory networks: stable equilibria might form inclusion chains $\tau_1 \subsetneq \tau_2 \subsetneq \cdots \subsetneq \tau_k$, representing multiple layers of memory. We showed that such chains of stable fixed points can be made arbitrarily long in competitive TLNs. Determining what connectivity constraints might limit or enable such chains is an intriguing question for future work. 

A natural structural object arising from these results is the poset of stable fixed point supports, ordered by inclusion. Each node in this poset corresponds to a stable equilibrium, and the covering relations encode minimal transitions between coexisting attractors. For networks with multiple nested equilibria, this poset captures the combinatorial architecture of the attractor landscape. For instance, in our examples the poset contains chains of arbitrary length, but more generally it may contain branching and incomparable elements. Analyzing the topology of this poset, e.g., its height, width, connected components, or lattice properties, could reveal how inhibitory connectivity constraints shape the overall organization of stable states in competitive TLNs.

Finally, from a dynamical perspective, it is natural to ask how trajectories in a TLN select or transition between these nested equilibria, especially in the presence of perturbations or noise. For example, do transitions from a smaller-support equilibrium to a larger-support equilibrium (or vice versa) correspond to specific bifurcations as network parameters vary? Can features of the network’s connectivity (such as particular inhibitory motifs or feedback loops) predict the existence of stable non-minimal states? Addressing these issues would help advance a broader theoretical understanding connecting the combinatorial structure of $W$ with the topology of the attractor landscape in threshold-linear systems.

For the restricted family of combinatorial TLNs (CTLNs), our results highlight a fundamental divergence. The target-free clique conjecture in CTLN theory posits that every stable equilibrium’s support must correspond to a target-free clique in the network’s directed graph. If this conjecture holds, then no stable state in a CTLN can have a support that strictly contains another smaller fixed point support; in other words, all stable fixed points in such networks would necessarily be support-minimal. In stark contrast, we have shown that general competitive TLNs (unconstrained by the CTLN architecture) can indeed sustain coexisting stable states with nested supports. This means that purely inhibitory networks, when not limited to the combinatorial regime, are capable of the very hierarchical attractor relationships that CTLNs are conjectured to forbid, revealing a richer repertoire of steady states than previously recognized.

From a broader perspective, the ability of competitive TLNs to exhibit nested stable equilibria underscores the greater richness of their attractor structure compared to that of CTLNs. Whereas CTLNs under the target-free clique conjecture enforce a single-layer, minimal-support pattern for stable fixed points, general competitive TLNs can support hierarchies of attractors: sequences of stable equilibria embedded one within another. This suggests that the dynamical repertoire of TLNs is significantly more complex than what the combinatorial framework had captured. In summary, our findings broaden the understanding of threshold-linear network dynamics, indicating that when freed from strict structural constraints, even simple competitive TLNs can harbor multi-level stable states and far more intricate attractor landscapes than previously appreciated.

\section*{Acknowledgments}

We used GPT-5 to find the $3$-neuron counterexample and to draft some of the expository text.

\end{document}